\documentclass[12pt,indentfirst]{article}

\usepackage{bbm}
\usepackage{psfrag}
\usepackage{tikz}
\usepackage{dsfont}
\usepackage{enumerate}
\usepackage{bm}
\usepackage[square]{natbib}  
\usepackage{amsmath}
\usepackage[utf8]{inputenc}
\usepackage[english]{babel}
\usepackage{amsthm}
\usepackage{amssymb}
\usepackage{algpseudocode,algorithm,algorithmicx}

\DeclareMathOperator*{\argmax}{arg\,max}

\newtheorem{theorem}{Theorem}
\newtheorem{corollary}{Corollary}
\newtheorem{lemma}{Lemma}
\newtheorem{definition}{Definition}

\newtheorem{remark}{Remark}

\def\cl#1{\begin{center}\Large #1\end{center}}
\def\ce#1{\centerline{#1}}

\algrenewcommand\algorithmicrequire{\textbf{Initialisation:}}
\algrenewcommand\algorithmicensure{\textbf{Output:}}

\begin{document}

\cl{Optimal Temperature Spacing for Regionally
Weight-preserving Tempering}

\medskip \ce{by}

\medskip \ce{Nicholas G. Tawn and Gareth O.\ Roberts
}

\begin{abstract}

Parallel tempering is popular method for allowing MCMC algorithms to
properly explore a $d$-dimensional multimodal target density.  One
problem with traditional  power-based parallel tempering for multimodal targets is that the
proportion of probability mass associated to the modes can change for different inverse-temperature
values, sometimes dramatically so. Complementary work by the authors proposes a novel solution involving auxiliary targets  that preserve regional weight upon powering up the density. This paper attempts to address the question of how to choose the temperature spacings in an optimal way when using this type of weight-preserving approach.
The problem is analysed in a tractable setting for computation of the expected squared jumping distance which can then be optimised with regards to a tuning parameter.
The conclusion is that for an appropriately constructed regionally weight-preserved tempering algorithm targeting a $d$-dimensional target distribution, the consecutive temperature spacings should behave as $\mathcal{O}\left(d^{-1/2}\right)$ and this induces an optimal acceptance rate for temperature swap moves that lies in the interval $[0,0.234]$.
\end{abstract}

\section{Introduction}

Consider the problem of simulating from a $d$-dimensional target density, $\pi(x)$,  with respect to Lebesgue measure on $\mathbb{R}^d$, which is only known up to a scaling constant. A popular approach is to use Markov chain Monte Carlo (MCMC) which uses a Markov chain that is constructed in such a way that the invariant distribution is $\pi$.

If $\pi$ exhibits multimodality then the majority of MCMC algorithms which use tuned localised proposal mechanisms e.g.,\ \cite{roberts1997weak}, \cite{roberts2001optimal} and \cite{Mangoubi2018}, can often fail to explore the state space leading to bias samples. A practical approach to overcoming the multimodality issue is the  \textit{Parallel Tempering} (PT) algorithm, which augments the state space with auxiliary target distributions that enable the chain to rapidly traverse the entire state space.

The major problem with the auxiliary targets that are typically chosen for the PT algorithm is that in general they do not preserve regional mass, see \cite{woodard2009conditions}, \cite{woodard2009sufficient} and \cite{Bhatnagar2016}. This problem can result in the necessary run-time of the  PT algorithm increases exponentially with the dimensionality of the problem.

Novel methodology presented in \cite{GarethJeffNick} and  \cite{NTawnThesis} attempts to mitigate the modal weight degeneracy in certain toy situations. In particular  \cite{GarethJeffNick} introduces and analyses the performance of the new prototype Hessian adjusted tempering algorithm (HAT). The HAT algorithm is similar to the PT approach but partitions the state space into regions and upon power-tempering the target, the regions are re-weighted to (approximately) preserve their mass at all temperatures.

 \cite{atchade2011towards} and \cite{roberts2014minimising} show that the temperature spacing for the PT algorithm needs to scale as $\mathcal{O}(d^{-1/2})$ where $d$ denotes the dimensionality of the problem. Furthermore they demonstrate the existence of a limiting efficiency curve as a function of the  temperature swap acceptance rate that is optimised at 0.234.

This paper analyses the optimal spacings for a general \textit{regionally-weight preserved tempering algorithm}. The conclusion is similar to that of \cite{atchade2011towards}. Temperature spacings need to be scaled as $\mathcal{O}(d^{-1/2})$ and there exists a limiting efficiency curve that is a function of the  temperature swap acceptance rate that is optimised at a value in the interval [0,0.234].

The layout of the paper is as follows. Section~\ref{subsec:parallel} introduces the standard PT algorithm and then defines a regionally weight-preserved tempering approach. Section~\ref{Sec:OptscalRegional} presents and proves the two main theoretical contributions of the paper:  Theorem~\ref{Thr:optscal}, which introduces the main optimal spacing result using an extension of an elegant argument from \cite{Sherlock2006} to establish an optimal acceptance rate range, and Corollary~\ref{Cor:corrgeom}, which establishes a sufficient condition on the target density that leads to a geometric tempering schedule making algorithm setup/tuning significantly simpler. A brief conclusion is given in Section~\ref{conc}.

\section{Regionally Weight-Preserving PT Algorithm}
\label{subsec:parallel}

Many of the methods employed  to overcome the issues of multimodality in MCMC  use state space augmentation e.g.\ \cite{Wang1990a}, \cite{geyer1991markov}, \cite{marinari1992simulated}, \cite{Neal1996}, \cite{kou2006discussion}, \cite{2017arXiv170805239N}. Typically they utilise auxiliary distributions that can be more readily explored by suitable Markov chains and the resulting mixing information is then passed on to aid  inter-modal mixing in the desired target. A  convenient approach of this type uses power-tempered target distributions i.e.\ the target distribution at inverse temperature level $\beta$, for $\beta \in (0,1]$, is defined as \begin{equation}
\pi^\beta(x)\propto \left[\pi(x)\right]^\beta \label{powtar}
\end{equation} 
Such targets are the most common choice of auxiliary target when augmenting the state space for use in the popular simulated tempering (ST) and parallel tempering (PT) algorithms introduced in \cite{marinari1992simulated} and \cite{geyer1991markov}. For each algorithm one needs to choose a sequence of $n+1$ ``inverse temperatures'', $\Delta=\{\beta_0,\ldots,\beta_n\}$, where $0 \leq \beta_n<\beta_{n-1}<\ldots <\beta_1<\beta_0=1$ with the specification that a Markov chain sampling from the target distribution $\pi^{\beta_n}(x)$ can mix well across the entire state space.

 The PT algorithm runs a Markov chain on the augmented state space, $\mathcal{X}^n$, targeting an invariant distribution given by
\begin{eqnarray}
\pi(x_0,x_1,\ldots,x_n) \propto \pi^{\beta_0}(x_0)\pi^{\beta_1}(x_1)\ldots\pi^{\beta_n}(x_n).\nonumber
\end{eqnarray}
From an initialisation point for the chain the PT algorithm  alternates between two types of Markovian move
\begin{itemize}
		\item \textit{Temperature swap} moves that propose to swap the chain locations between a pair of adjacent temperature components. It is these swap moves that will allow mixing information from the hot, rapidly-mixing temperature level to be passed to aid mixing at the cold target state. \item  \textit{Within temperature} Markov chain moves that use standard localised MCMC schemes to update each of the $x_i$ whilst preserving marginal invariance. Note that whilst the choice of the \textit{within temperature} Markov chain moves is certainly important in accelerating the performance of the algorithm e.g.\ \cite{Ge2017} and \cite{Mangoubi2018}, the focus of this paper is entirely on the \textit{temperature swap} transitions.
\end{itemize}

To perform the \textit{temperature swap} move a pair of temperatures is chosen uniformly from the set of all adjacent pairs. Denote the marginal components $x_i$ and $x_{i+1}$ at inverse temperatures $\beta_{i}$ and $\beta_{i+1}$ respectively. To preserve detailed balance and therefore invariance to $\pi_n(\cdot)$ then the swap move is accepted with probability
\begin{equation}
A=\mbox{min}\Bigg( 1,\frac{\pi^{\beta_{i+1}}(x_i)\pi^{\beta_{i}}(x_{i+1})}{\pi^{\beta_{i}}(x_i)\pi^{\beta_{i+1}}(x_{i+1})} \Bigg).
\label{eq:parstd1}
\end{equation}

Note that this is the standard form of the PT \textit{temperature swap} move procedure and this will be the focus of this paper; however it is worth noting that recent work in \cite{Tawn2018} shows that in certain circumstances this can be modified to give accelerated mixing through the temperature schedule.

The regionally weight-preserved parallel tempering (RWPPT) algorithm is essentially the same as the standard PT algorithm but uses modified auxiliary target distributions. These auxiliary targets simply preserve mass on some chosen partition components of the state space across all temperatures in the schedule; Definition~\ref{eq:defRWPTT} makes this notion concrete.
\begin{definition}[Regionally Weight-Preserved Tempered Target (RWPTT)] \label{eq:defRWPTT}
For some $K \in \mathbb{N}$, consider a partition of the $d$-dimensional state space $\mathcal{X}$ such that $\mathcal{X}= \bigsqcup_{k=1}^K A_k$ (where $\bigsqcup$ denotes the disjoint union). Denoting \[ w_k = \int_{A_k} \pi(x) dx\] then the RWPTT at inverse temperature level $\beta$ is given by
\begin{equation}
	\pi_\beta(x) \propto  \sum_{k=1}^K  w_k  \frac{\pi^\beta(x)}{\int_{A_k}\pi^\beta (x) dx} \mathds{1} _{[x \in A_{k}]}. \label{eq:regweightpretarg}
\end{equation}
\end{definition}
It is these RWPTTs are to be utilised in the PT framework (such an approach is taken in \cite{GarethJeffNick}) and the resulting algorithm is defined as follows:

\begin{definition}[Regionally Weight-Preserved Parallel Tempering (RWPPT) Algorithm]
The RWPPT algorithm is defined to be a PT algorithm which targets the invariant distribution 
\begin{eqnarray}
\pi_n(x_0,x_1,\ldots,x_n) \propto \pi_{\beta_0}(x_0)\pi_{\beta_1}(x_1)\ldots\pi_{\beta_n}(x_n).\nonumber
\end{eqnarray} 
where $\pi_\beta(\cdot)$ are the RWPTTs defined in Definition~\ref{eq:defRWPTT}. The algorithm proceeds identically to the PT algorithm combining within temperature and temperature swap moves. Note that a temperature swap proposal between adjacent temperature levels $\beta_i$ and $\beta_{i+1}$ is now accepted with probability
\begin{equation}
A=\mbox{min}\Bigg( 1,\frac{\pi_{\beta_{i+1}}(x_i)\pi_{\beta_{i}}(x_{i+1})}{\pi_{\beta_{i}}(x_i)\pi_{\beta_{i+1}}(x_{i+1})} \Bigg).
\nonumber
\end{equation}
\end{definition}

\section{Optimal spacing of a RWPPT Algorithm}
\label{Sec:OptscalRegional}

The aim is to maximise the efficiency of the inter-modal exploration of the RWPPT algorithm. This can be done by minimising the expected time taken to transfer the (assumed rapidly mixing) hot state mixing information, at temperature level $\beta_n$, to  the cold target state, at temperature level $\beta_0$. A measure of the  efficiency of mixing through the temperature schedule is the Expected Squared Jumping Distance ($ESJD_\beta$), see \cite{atchade2011towards}. Indeed, $ESJD_\beta$ becomes the natural measure of efficiency if there is an associated limiting diffusion process, \cite{roberts2014minimising}. For the RWPPT approach studied here, $ESJD_\beta$ is likely to be the natural choice since \cite{GarethJeffNick} derives a limiting skew-Brownian motion result for the special case of a Gaussian mixture target distribution that is analogous to the RWPPT setup.

\begin{definition}[$ESJD_\beta$]
Assume the framework of a RWPPT algorithm and suppose that a  temperature swap move between inverse temperature levels $\beta$ and  $\beta^{'}= \beta+ \epsilon$, for some $\epsilon >0$, has been proposed. Then the $ESJD_\beta$ is defined as
\begin{eqnarray}
	ESJD_\beta &=& \mathbb{E}\left[ (\gamma -\beta) ^2 \right] \label{eq:ESJDw}
\end{eqnarray}
where $\gamma$ is the random variable taking the values $\beta$ if the proposed temperature swap move is rejected or $\beta^{'}$, if the move is accepted. The expectation is with respect to the target distribution, $\pi_n(\cdot)$, since it assumed that the Markov chain is in  stationarity.
\end{definition}

In order to mix effectively through the temperature schedule one needs a strategy that balances making overly ambitious large jump proposals which have low acceptance probabilities, against under ambitious small jump proposals with high acceptance probabilities; both of which lead to slow mixing through the temperature schedule. By tuning consecutive temperature spacings to maximise the $ESJD_\beta$ between levels then a strategy balancing ambition and non degenerate acceptance should be reached.

Section~\ref{sec:ass} provides the assumptions and toy setting framework necessary for a tractable analysis of this optimisation problem for a RWPPT algorithm; concluding with the key theoretical contribution of this paper, Theorem~\ref{Thr:optscal}.

\subsection{Optimal Temperature Spacing} \label{sec:ass}

Assume a  $d$-dimensional state space $\mathcal{X}_d \subseteq \mathbb{R}^d$,  is constructed from a disjoint union of hypercubes i.e.\
\begin{equation}
	\mathcal{X}_d = \bigsqcup_{j=1}^K A_{(j,d)}\label{eq:disjunion}
\end{equation}
with 
\begin{equation}
	 A_{(k,d)}= A_k^1 \otimes \ldots \otimes A_k^d = [a_k^1,b_k^1] \otimes \ldots \otimes [a_k^d,b_k^d]  \label{eq:hyp}
\end{equation}
so that for $i ,j \in \{ 1,\ldots, d \}$, $(b_k^i-a_k^i)=(b_k^j-a_k^j)$. It is these hypercubes that will be the regions upon which mass is preserved for the  RWPTTs.

For tractability it will be assumed that the target distribution $\pi(\cdot)$ has conditionally independent marginals within each hypercube:  \begin{equation} \pi_j^{\beta}(x) =\left[\prod_{i=1}^d \frac{ f_{(k,i)}^{\beta}(x_i)}{\int_{A_k^i}f_{(k,i)}^{\beta}(z)dz}\right] \mathds{1} _{[\mathbf{x} \in A_{(k,d)}]}. \label{eq:margiid} \end{equation}
Then the RWPTT for any $\beta \in (0,\infty)$ is given by
\begin{equation}  \pi_{\beta}(x) = \sum_{k=1}^{K} w_{k}\pi_k^{\beta}(x) . \label{eq:WPTTpf}\end{equation}

It is assumed that the marginal components $f_{(k,i)}^{\beta}(\cdot)$ have a shifted iid form on the corresponding region $A_k$. That is, for each $k\in \{ 1,\ldots, K \}$ there is a density, denoted $f_k(\cdot)$, such that for all $i  \in \{ 1,\ldots, d \}$
\begin{equation}
	f_{(k,i)}(x_i) = f_k(x_i-\mu_k^i) \mathbbm{1}_{\left[x_i \in A_k^i\right]}, \label{eq:iidshift}
\end{equation}
where $\mu_k^i= \frac{a_k^i+b_k^i}{2}$. Furthermore, it is assumed that for all $k\in \{ 1,\ldots, K \}$ $f_k(\cdot) \in C^3$ and that for $a \in \{1,2\}$ and $b \in \{0,1\}$ there exists integrable functions $g_{(a,b,k)}(x): \mathbb{R} \rightarrow \mathbb{R}$ such that 
\begin{equation}
	\left| \log f_k(x) \right|^b\left|\frac{\partial^a}{\partial \beta^a} f_k^\beta (x) \right| \le g_{(a,b,k)}(x). \label{eq:techcond1}
\end{equation}

As in \cite{atchade2011towards}, the aim  is to maximise the the $ESJD_\beta$ in \eqref{eq:ESJDw}  but now under the setting established above with the state space given in \eqref{eq:hyp} and the corresponding RWPTTs constructed by \eqref{eq:margiid} and \eqref{eq:WPTTpf}.  Further to this we aim to associate this optimal spacing with a corresponding limiting optimal temperature swap acceptance rate, which is defined as follows:
\begin{definition}[Temperature Swap Acceptance Rate]
Assume the framework of a RWPPT algorithm for a $d$-dimensional target and take two consecutive inverse temperature levels $\beta$ and  $\beta^{'}= \beta+ \epsilon$, for some $\epsilon = \ell/d^{1/2}$. Then the acceptance rate for the temperature swap moves between these two levels is defined as
\begin{eqnarray}
	a(\ell,d) = \mathbb{E}_{\pi_n} \left[ \min \left(1 ,\frac{\pi_{\beta^{'}}(x)\pi_{\beta}(y)}{\pi_{\beta^{'}}(y)\pi_{\beta}(x)} \right)\right] \label{def:dnonlimacc}
\end{eqnarray}
with the corresponding limiting temperature swap acceptance rate defined as
\begin{equation}
	a(\ell) = \lim_{d \rightarrow \infty} a(\ell,d). \label{def:limacc}
\end{equation}
\end{definition}

Denoting the CDF of a standard Gaussian distribution by $\Phi (\cdot)$, then the main result is as follows:

\begin{theorem}[Optimal Spacing for a RWPPT Algorithm]
Consider applying the RWPPT algorithm under the setting of \eqref{eq:disjunion}, \eqref{eq:hyp}, \eqref{eq:margiid}, \eqref{eq:WPTTpf}, \eqref{eq:iidshift} and \eqref{eq:techcond1}. Consider two consecutive inverse temperature levels $\beta$ and $\beta^{'}=\beta +\epsilon$ where  $\epsilon =\ell/d^{1/2}$. Then the following holds:
\begin{enumerate}[i]
	\item The $ESJD_\beta$, from \eqref{eq:ESJDw}, satisfies  \begin{equation}
E(\ell):= \lim_{d\rightarrow \infty}d (ESJD_\beta) = \sum_{j=1}^K\sum_{m=1}^K  2 w_j w_m  \ell^2 \Phi\left(-\frac{\ell\sigma_{j,m}(\beta)}{2}\right)  \label{eq:theoremstatement}
\end{equation}
where $\sigma_{j,m}(\beta)=  \left( I_j(\beta)+I_m(\beta)\right)^{\frac{1}{2}}$ and $I_k(\beta)=\mbox{Var}_{f_{k}^{\beta}}\left[ \log f_{k}(X) \right] $.
\item Denoting $\hat{\ell}:= \argmax_\ell{E(\ell)}$, then the corresponding limiting temperature swap acceptance rate, from \eqref{def:limacc}, is given by \[ a(\hat{\ell}) \in [0,0.234]. \]
\end{enumerate}
\label{Thr:optscal}
\end{theorem}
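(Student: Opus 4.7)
For part (i), the plan is to reduce the ESJD to an expected acceptance probability and identify its limit via Taylor expansion and the CLT. Under stationarity the two marginals entering a swap are independent draws $x\sim\pi_\beta$, $y\sim\pi_{\beta'}$, so $ESJD_\beta=\epsilon^2\,\mathbb{E}[\min(1,R)]$ with $R$ the Metropolis ratio. Conditioning on $\{x\in A_{(j,d)},\,y\in A_{(m,d)}\}$ (probability $w_j w_m$) collapses the mixture \eqref{eq:WPTTpf} so that $x\sim\pi_j^\beta$ and $y\sim\pi_m^{\beta'}$. The shifted-iid structure \eqref{eq:iidshift} then expresses $\log R$ as a sum of $d$ iid contributions minus the deterministic normalisers $d\log[Z_k(\beta')/Z_k(\beta)]$ with $Z_k(\beta):=\int f_k^\beta$. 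Expanding these normalisers to second order in $\epsilon=\ell/\sqrt d$ using $\tfrac{d}{d\beta}\log Z_k=\mathbb{E}_{f_k^\beta}[\log f_k]$ and $\tfrac{d^2}{d\beta^2}\log Z_k=I_k(\beta)$, the $\mathcal{O}(\sqrt d)$ deterministic contributions cancel against the means of the iid sums, and a CLT applied to the centred sums yields $\log R\Rightarrow N(-V/2,V)$ with $V=\ell^2\sigma_{j,m}(\beta)^2$. The integrability control \eqref{eq:techcond1} supplies the uniform integrability needed to exchange limit and expectation, and the classical identity $\mathbb{E}[\min(1,e^W)]=2\Phi(-\sqrt V/2)$ for $W\sim N(-V/2,V)$, summed against $w_j w_m$ and scaled by $d$, delivers \eqref{eq:theoremstatement}.

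For part (ii), the plan is to characterise $\hat\ell$ through its first-order condition and then invoke an extension of the Sherlock argument. Writing $E(\ell)=\ell^2 a(\ell)$ with $a(\ell)=2\sum_{j,m}w_j w_m\Phi(-\ell\sigma_{j,m}/2)$ and imposing $E'(\hat\ell)=0$ simplifies, after letting $U=\hat\ell\sigma_{j,m}/2$ denote a random variable on $(0,\infty)$ with law $w_j w_m$ on pairs $(j,m)$, to the identity $\mathbb{E}[g(U)]=0$ where $g(u)=u\phi(u)-2\Phi(-u)$. The function $g$ has a unique positive zero $u^*$ at which $2\Phi(-u^*)\approx 0.234$, coinciding with the classical maximiser of the single-component efficiency $u\mapsto u^2\Phi(-u)$. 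Since $a(\hat\ell)=\mathbb{E}[2\Phi(-U)]$, the upper bound $a(\hat\ell)\le 0.234$ becomes the purely analytic claim that, among all probability measures on $(0,\infty)$ with $\int g\,d\mu=0$, the functional $\int 2\Phi(-u)\,d\mu$ is maximised at the point mass $\delta_{u^*}$; the lower bound $a(\hat\ell)\ge 0$ is immediate from non-negativity.

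The main obstacle is this extremality claim. A natural route is a convex-hull reduction to two-point distributions with atoms $u_1<u^*<u_2$ straddling $u^*$ (the constraint $\mathbb{E}[g(U)]=0$ forces such straddling since $g$ takes opposite signs on the two sides of $u^*$); the constraint then fixes the mixing weight, and the inequality reduces to a ``two-point straddling inequality'' that can be recast as a monotonicity property of the ratio $\Phi(-u)/g(u)$ across $u^*$. Establishing this inequality in sufficient generality — so that it applies uniformly to every collection of variances $\{I_k(\beta)\}$ arising in the setup — is the heart of the extended Sherlock argument, and is where the main analytic work of the theorem sits.
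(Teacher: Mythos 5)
Your treatment of part (i) is essentially the paper's own argument: condition on the pair of regions (which, crucially, carry probability $w_jw_m$ at \emph{every} temperature because the targets are weight-preserving), exploit the shifted-iid structure to write $\log R$ as a centred iid sum plus deterministic normaliser terms, Taylor-expand $\log Z_k(\beta')-\log Z_k(\beta)$ using $\tfrac{\partial}{\partial\beta}\log Z_k = \mathbb{E}_{f_k^\beta}[\log f_k]$ and $\tfrac{\partial^2}{\partial\beta^2}\log Z_k = I_k(\beta)$, apply the CLT, and use $\mathbb{E}[\min(1,e^W)]=2\Phi(-\sqrt{V}/2)$. One small misattribution: the exchange of limit and expectation does not need uniform integrability from \eqref{eq:techcond1} --- $\min(1,e^B)$ is bounded and continuous so weak convergence suffices; \eqref{eq:techcond1} is what licenses differentiating under the integral sign (Leibniz) to get the derivative identities for $\log Z_k$, and the paper also uses it to control the third-order remainders $J_k(\beta+\xi)$, which you should not omit since they enter at order $d\epsilon^3=\ell^3 d^{-1/2}$.

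Part (ii) is where there is a genuine gap. You correctly derive the first-order condition $\mathbb{E}[g(U)]=0$ with $g(u)=u\phi(u)-2\Phi(-u)$ and $U=\hat{\ell}\Sigma_\beta/2$, and you correctly identify that the bound $a(\hat{\ell})\le 0.234$ amounts to the extremality of $\delta_{u^*}$ for $\mathbb{E}[2\Phi(-U)]$ under that constraint --- but you then defer the proof of exactly that claim, and the route you sketch is shaky. The reduction to two-point measures is fine in principle, but recasting the two-point inequality as ``monotonicity of $\Phi(-u)/g(u)$ across $u^*$'' runs into two problems: the ratio is singular at $u^*$ (where $g$ vanishes), and the natural reformulation --- that $\Phi(-u)$ is a concave function of $g(u)$ along the curve $u\mapsto(g(u),\Phi(-u))$ --- fails to even parse globally because $g$ is not monotone: $g'(u)=(3-u^2)\phi(u)$ changes sign at $u=\sqrt{3}>u^*\approx 1.19$, so $g$ is not injective on $(u^*,\infty)$. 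The paper closes this step with a much shorter device borrowed from \cite{Sherlock2006}: substitute $V=\Phi(-U)\in(0,1/2)$, observe that the first-order condition reads $2\mathbb{E}[V]=\mathbb{E}[h(V)]$ with $h(x)=-\Phi^{-1}(x)\,\phi(\Phi^{-1}(x))$, and that $h$ is concave on the relevant range, so Jensen gives $\hat{a}=2\mathbb{E}[V]\le h(\mathbb{E}[V])$. Writing $m=-\Phi^{-1}(\mathbb{E}[V])$ this is $2\Phi(-m)\le m\phi(m)$, i.e.\ $g(m)\ge 0$, which forces $m\ge u^*$ and hence $\hat{a}=2\Phi(-m)\le 2\Phi(-u^*)=0.234$. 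Your proposal would need either to supply a complete proof of the two-point straddling inequality (working around the non-monotonicity of $g$) or to replace that step with the Jensen argument; as written, the central inequality of part (ii) is asserted rather than proved.
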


\begin{remark} The assumption that $\mathcal{X}_d$ is constructed from a set of hypercubes can be weakened to hyper-rectangles by rescaling the marginals. So for each $k\in \{1,\ldots,K \}$ the hyper-rectangle given by \[A_{(k,d)}= A_k^1 \otimes \ldots \otimes A_k^d = [a_k^1,b_k^1] \otimes \ldots \otimes [a_k^d,b_k^d]  \]
omits the marginal target defined on this support given by
\[
	f_{(k,i)}(x) = \frac{1}{\sigma_{(k,i)}}f_k\left(\frac{x-\mu_k^i}{\sigma_{(k,i)}}\right),
\]
where 
\[  \left(\frac{b_k^i-a_k^i}{\sigma_{(k,i)}}\right)= C ~~~~~ \text{for some} ~~~C\in \mathbb{R}.\] 
\end{remark}
\begin{remark} Note that the assumption that $\mathcal{X}_d \subseteq \mathbb{R}^d$ can be weakened to any state space where $\mathcal{X}_d = \mathcal{X} \otimes \ldots \otimes \mathcal{X}$ where $\mathcal{X}$ is any ordered set.
\end{remark}
\begin{proof}

The proof of Theorem~\ref{Thr:optscal} comprises 3 key steps that are presented as lemmata below. Lemma~\ref{lemma:asymWB} establishes that the logged swap move acceptance ratio converges weakly to a Gaussian mixture distribution as $d\rightarrow \infty$. Lemma~\ref{asymlemmaesjd} exploits the weak convergence to derive a limiting formula for $ESJD_\beta$ that is a function of the tuning parameter $\ell$ and can thus be maximised to give an optimal value, $\hat{\ell}$, . Finally, Lemma~\ref{accstepw}  shows that the acceptance rate induced by choosing $\ell = \hat{\ell}$ lies in the interval [0,0.234].

For notational clarity, suppose that a proposed temperature swap move between the temperature marginals $x$ and $y$ at the consecutive inverse temperatures level $\beta$ and $\beta^{'}=\beta+\epsilon$. The aim is to maximise  the $ESJD_\beta$ which can be immediately decomposed as
\begin{eqnarray}
\mathbb{E}_{\pi_n}\left[  (\gamma-\beta)^2\right] &=& \epsilon^2 \times a(\ell,d) \nonumber\\
&=& \epsilon^2 \times  \mathbb{E}_{\pi_n}\left[ \min(1 , e^B) \right], \label{eq:ESJDregional}
\end{eqnarray}
where $a(\ell,d)$ is defined in \eqref{def:dnonlimacc} and
\begin{eqnarray}
B &=&\left[ \log \left( \pi_{\beta^{'}}(x)) \right) -\log \left( \pi_{\beta}(x)) \right)\right] -\left[ \log \left( \pi_{\beta^{'}}(y)) \right)  - \log \left( \pi_{\beta}(y)) \right) \right] \nonumber \\
&=:& H_d^{\beta}(x) -H_d^{\beta}(y).
\label{eq:Bw}
\end{eqnarray}

\begin{lemma}[Asymptotic Gaussianity of $B$] \label{lemma:asymWB}
Under the setting of Theorem~\ref{Thr:optscal} and with $B$ defined in \eqref{eq:Bw}; as $d \rightarrow \infty$,  $B$  converges weakly to a Gaussian mixture distribution such that
\begin{equation}
	B \Rightarrow \sum_{j=1}^K\sum_{m=1}^K w_j w_m N\left( -\frac{\ell ^2}{2}\left( I_j(\beta)+I_m(\beta)\right),  \ell ^2 \left( I_j(\beta)+I_m(\beta)\right) \right),
\end{equation}
where $I_k(\beta)=\mbox{Var}_{f_{k}^{\beta}}\left[ \log f_{k}(X) \right] $.
\end{lemma}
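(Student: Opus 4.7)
The plan is to exploit the disjointness of the partition to reduce $B$ to a sum of independent contributions from $x$ and $y$, and then use a standard central limit argument on $d$ iid coordinates within each region. First I would note that because the sets $A_{(k,d)}$ form a disjoint partition, on the event $\{x \in A_{(j,d)}\}$ only the $j$-th term in the mixture \eqref{eq:WPTTpf} contributes, so $\pi_\beta(x)=w_j\pi_j^\beta(x)$ and the weights cancel in the ratio: $\log\pi_{\beta'}(x)-\log\pi_\beta(x)=\log\pi_j^{\beta'}(x)-\log\pi_j^\beta(x)$. Using the conditionally independent marginals \eqref{eq:margiid} together with the shifted iid form \eqref{eq:iidshift} and the fact that $A_{(j,d)}$ is a hypercube (so the per-coordinate normalising integrals coincide), one gets
\begin{equation}
H_d^\beta(x)\,\mathds{1}_{[x\in A_{(j,d)}]}=\epsilon\sum_{i=1}^d \log f_j(x_i-\mu_j^i)\;-\;d\bigl[\log Z_j(\beta+\epsilon)-\log Z_j(\beta)\bigr],\nonumber
\end{equation}
where $Z_j(\beta):=\int f_j^\beta(u)\,du$ over the (common) centred interval of width $C_j$.

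Next I would Taylor expand the deterministic normalising term. The technical bound \eqref{eq:techcond1} allows differentiation under the integral sign, giving $(\log Z_j)'(\beta)=m_j(\beta):=\mathbb{E}_{f_j^\beta}[\log f_j(X)]$ and $(\log Z_j)''(\beta)=I_j(\beta)$. Hence, with $\epsilon=\ell/\sqrt{d}$,
\begin{equation}
d\bigl[\log Z_j(\beta+\epsilon)-\log Z_j(\beta)\bigr]=\ell\sqrt{d}\,m_j(\beta)+\tfrac{\ell^2}{2}I_j(\beta)+o(1).\nonumber
\end{equation}
Conditional on $x\in A_{(j,d)}$ the coordinates $x_i-\mu_j^i$ are iid with density $f_j^\beta/Z_j(\beta)$, so the summands $\log f_j(x_i-\mu_j^i)$ have mean $m_j(\beta)$ and variance $I_j(\beta)$; applying the ordinary CLT and multiplying by $\epsilon$, the $\sqrt{d}\,m_j(\beta)$ terms cancel exactly against the Taylor expansion, leaving
\begin{equation}
H_d^\beta(x)\,\bigl|\,\{x\in A_{(j,d)}\}\;\Rightarrow\;N\!\left(-\tfrac{\ell^2}{2}I_j(\beta),\,\ell^2 I_j(\beta)\right).\nonumber
\end{equation}

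The corresponding calculation for $y$ requires a little more care because $y\sim\pi_{\beta'}$, not $\pi_\beta$, so conditional on $y\in A_{(m,d)}$ the coordinates have density $f_m^{\beta'}/Z_m(\beta')$ and the summand mean is $m_m(\beta')=m_m(\beta)+\epsilon I_m(\beta)+O(\epsilon^2)$. The extra $\ell^2 I_m(\beta)$ bias combines with the Taylor remainder from $d[\log Z_m(\beta+\epsilon)-\log Z_m(\beta)]$ to flip the sign of the mean, yielding $H_d^\beta(y)\mid\{y\in A_{(m,d)}\}\Rightarrow N(\tfrac{\ell^2}{2}I_m(\beta),\,\ell^2 I_m(\beta))$. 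Independence of $x$ and $y$ under $\pi_n$ then gives, conditional on $\{x\in A_{(j,d)},y\in A_{(m,d)}\}$,
\begin{equation}
B\;\Rightarrow\;N\!\left(-\tfrac{\ell^2}{2}(I_j(\beta)+I_m(\beta)),\,\ell^2(I_j(\beta)+I_m(\beta))\right),\nonumber
\end{equation}
and since the event has probability $w_jw_m$ (weights are preserved at every $\beta$, which is the whole point of the RWPTT construction), unconditioning produces the claimed Gaussian mixture.

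The main obstacle is bookkeeping: ensuring that the $O(\sqrt d)$ deterministic drifts from the CLT and from $\log Z_j(\beta+\epsilon)$ cancel, and that the cubic Taylor remainder is genuinely $o(1)$. This is where \eqref{eq:techcond1} does real work: the integrable dominants $g_{(a,b,k)}$ justify the two differentiations of $Z_j(\beta)$ and control $(\log Z_j)'''$ uniformly on a neighbourhood of $\beta$, so the remainder is $O(d\epsilon^3)=O(d^{-1/2})$. Everything else is a routine application of the Lindeberg–Lévy CLT together with Slutsky's theorem to absorb the $O(\epsilon)$ discrepancy between $\sqrt{I_m(\beta')}$ and $\sqrt{I_m(\beta)}$.
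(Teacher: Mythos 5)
Your proposal is correct and follows essentially the same route as the paper's proof: a change of variables to the shifted iid coordinates, Taylor expansion of the log normalising constant with $(\log Z_k)'=M_k(\beta)$ and $(\log Z_k)''=I_k(\beta)$ justified by \eqref{eq:techcond1}, the crucial recentring of the $y$-contribution at $\beta'$ rather than $\beta$, a conditional CLT given the regions containing $x$ and $y$, and unconditioning over the events of probability $w_jw_m$. The only cosmetic difference is that you apply the CLT separately to $H_d^\beta(x)$ and $H_d^\beta(y)$ and add the independent Gaussian limits, whereas the paper collects both into a single iid per-coordinate sum $r_{(x,y,j,m),i}^{(\beta,\beta')}$ and applies the CLT once.
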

\begin{proof}
See the Appendix in Section~\ref{app:pf1}.
\end{proof}

Since the function of $B$ given by $\min \left\{ 1, e^B \right\}$ is bounded then the weak convergence of $B$ to Gaussianity from Lemma~\ref{lemma:asymWB}, can be exploited in the following lemma.

\begin{lemma}[Asymptotic form of $ESJD_\beta$] \label{asymlemmaesjd}
Under the setting of Theorem~\ref{Thr:optscal}  then
\begin{equation}
E(\ell) :=\lim_{d \rightarrow \infty} (d \times ESJD_\beta) = \ell^2\sum_{j=1}^K\sum_{m=1}^K w_j w_m \left(   2 \Phi\left(-\frac{\ell\sigma_{j,m}(\beta)}{2}\right)  \right), \label{eq:ESJDfinal}
\end{equation}
where
$\sigma_{j,m}(\beta)= \left( I_j(\beta)+I_m(\beta)\right)^{\frac{1}{2}}$.
\end{lemma}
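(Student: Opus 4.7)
The plan is to show that $d\cdot ESJD_\beta$ equals $\ell^2$ times an acceptance-rate expectation, push the limit inside using the weak convergence established in Lemma~\ref{lemma:asymWB}, and then perform the classical ``$2\Phi(-\sigma/2)$'' computation mixture-component by mixture-component.

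\textbf{Step 1: reduction to an acceptance-rate limit.} Starting from the decomposition \eqref{eq:ESJDregional} and using $\epsilon = \ell/d^{1/2}$, I would write
\[
d \cdot ESJD_\beta = d \cdot \epsilon^2 \cdot \mathbb{E}_{\pi_n}\left[\min(1,e^B)\right] = \ell^2 \cdot \mathbb{E}_{\pi_n}\left[\min(1,e^B)\right].
\]
So the lemma reduces to computing $\lim_{d\to\infty} \mathbb{E}_{\pi_n}[\min(1,e^B)]$.

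\textbf{Step 2: interchange of limit and expectation.} The function $\varphi(b) := \min(1,e^b)$ is continuous and bounded by $1$. By the weak convergence of $B$ to the Gaussian mixture $B_\infty$ asserted in Lemma~\ref{lemma:asymWB}, the Portmanteau theorem (bounded continuous case) yields
\[
\lim_{d\to\infty}\mathbb{E}_{\pi_n}\left[\min(1,e^B)\right] = \mathbb{E}\left[\min(1,e^{B_\infty})\right].
\]

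\textbf{Step 3: componentwise evaluation.} Since $B_\infty$ is a mixture with weights $w_jw_m$ over components $X_{j,m} \sim N(-\sigma^2_{j,m}/2,\, \sigma^2_{j,m})$ with $\sigma^2_{j,m}=\ell^2(I_j(\beta)+I_m(\beta))$,
\[
\mathbb{E}\left[\min(1,e^{B_\infty})\right] = \sum_{j=1}^K\sum_{m=1}^K w_j w_m\, \mathbb{E}\left[\min(1,e^{X_{j,m}})\right].
\]
For each component I would invoke the standard lognormal acceptance identity: if $X \sim N(-\sigma^2/2,\sigma^2)$, then $\mathbb{E}[e^X]=1$, and a short calculation (write $\mathbb{E}[\min(1,e^X)] = P(X\ge 0) + \mathbb{E}[e^X\mathbf{1}_{\{X<0\}}]$, then perform the usual completion-of-the-square change of measure on the second term) gives both terms equal to $\Phi(-\sigma/2)$, hence
\[
\mathbb{E}\left[\min(1,e^X)\right] = 2\Phi\!\left(-\sigma/2\right).
\]
Substituting $\sigma = \ell\sigma_{j,m}(\beta)$ and combining with Step~1 yields \eqref{eq:ESJDfinal}.

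\textbf{Anticipated obstacle.} None of the three steps is technically difficult here: Step~1 is algebra, Step~2 is immediate once Lemma~\ref{lemma:asymWB} is in hand (all the work is deferred to that lemma's appendix proof), and Step~3 is the by-now-classical optimal-scaling computation (as in \cite{atchade2011towards}). The only place one should be a little careful is in justifying that the mixture representation of $B_\infty$ factors cleanly through the expectation of $\min(1,e^{\cdot})$ --- this is simply a consequence of the fact that weak limits of mixtures of distributions are mixtures of the weak limits, so $\mathbb{E}[\varphi(B_\infty)] = \sum_{j,m} w_j w_m \mathbb{E}[\varphi(X_{j,m})]$ with no extra subtlety.
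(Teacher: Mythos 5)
Your proposal is correct and follows essentially the same route as the paper's proof: reduce $d\cdot ESJD_\beta$ to $\ell^2\,\mathbb{E}[\min(1,e^B)]$, pass to the limit using boundedness and continuity of $\min(1,e^{\cdot})$ together with Lemma~\ref{lemma:asymWB}, and apply the identity $\mathbb{E}[1\wedge e^G]=2\Phi(-\sigma/2)$ (which the paper simply cites from Roberts et al.\ rather than rederiving) component by component of the Gaussian mixture. Your version just spells out the details that the paper leaves implicit.
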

\begin{proof}
See the Appendix in Section~\ref{app:pf2}.
\end{proof}

To find the limiting optimal spacing,  $E(\ell)$ in (\ref{eq:ESJDfinal}) is maximised with respect to $\ell$, giving a value denoted $\hat{\ell}$. 

\begin{lemma}[Optimal Limiting Acceptance Rate] \label{accstepw}
The optimising value, $\hat{\ell}$, established in Lemma~\ref{asymlemmaesjd}, corresponds to a consecutive temperature level spacing with a temperature swap acceptance rate, $\hat{a}$ in the range of $[0,0.234]$.
\end{lemma}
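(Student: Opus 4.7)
The plan is to turn the optimality condition for $\hat{\ell}$ into a single scalar moment identity and then dominate the acceptance rate by an affine function of the quantity appearing in that identity, in the spirit of the trick attributed to \cite{Sherlock2006}. Using $E(\ell)=\ell^{2}a(\ell)$ with $a(\ell)=2\sum_{j,m} w_j w_m \Phi(-\ell\sigma_{j,m}/2)$ from Lemma~\ref{asymlemmaesjd}, the stationarity condition $E'(\hat{\ell})=0$ reduces to
\[
2\,a(\hat{\ell}) \;=\; \hat{\ell}\sum_{j,m} w_j w_m \,\sigma_{j,m}\,\phi\bigl(\hat{\ell}\sigma_{j,m}/2\bigr).
\]
To collapse the double sum I would introduce an auxiliary random variable $Z$ taking the value $\sigma_{j,m}$ with probability $w_j w_m$ (a valid distribution since $\sum_{j,m} w_jw_m=1$) and set $U:=\hat{\ell}Z/2$. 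Writing $F(u):=2\Phi(-u)-u\phi(u)$, the first-order condition becomes the single scalar identity $\mathbb{E}[F(U)]=0$, while simultaneously $a(\hat{\ell})=\mathbb{E}[2\Phi(-U)]=\mathbb{E}[U\phi(U)]$.

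The second step is a universal tangent-line inequality. Let $w_{*}>0$ denote the unique positive root of $F$; this is exactly the equation yielding the classical $0.234$ constant, so $2\Phi(-w_{*})\approx 0.234$. I would show that there exists $\lambda>0$ for which
\[
2\Phi(-u) \;\le\; 2\Phi(-w_{*}) + \lambda\,F(u), \qquad u \ge 0,
\]
with equality at $u=w_{*}$. Matching value and slope at $w_{*}$, and using $F'(u)=(u^{2}-3)\phi(u)$, forces $\lambda = 2/(3-w_{*}^{2})$, which is positive because $w_{*}^{2}\approx 1.42<3$. To verify the inequality globally let $g(u)$ denote the difference of the two sides; then $g(w_{*})=0$, $g(0)=1-\lambda-2\Phi(-w_{*})<0$, $g(u)\to -2\Phi(-w_{*})<0$ as $u\to\infty$, and
\[
g'(u) \;=\; -\phi(u)\bigl[\,2+\lambda(u^{2}-3)\bigr]
\]
vanishes only at $u=w_{*}$ on $(0,\infty)$ since $2/\lambda=3-w_{*}^{2}$. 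Hence $g$ rises from a negative value at $0$ up to $g(w_{*})=0$ and then decreases back to a negative limit, giving $g\le 0$ everywhere.

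Finally, taking expectations of the tangent inequality against the distribution of $U$ and invoking the FOC identity $\mathbb{E}[F(U)]=0$,
\[
a(\hat{\ell}) \;=\; \mathbb{E}[2\Phi(-U)] \;\le\; 2\Phi(-w_{*}) + \lambda\,\mathbb{E}[F(U)] \;=\; 2\Phi(-w_{*}) \;\approx\; 0.234,
\]
while $a(\hat{\ell})\ge 0$ is trivial since $a$ is an average of probabilities. The hard part of the argument is verifying the tangent inequality globally: $F$ is not monotone and changes sign, so the inequality is not automatic from tangency at $w_{*}$; it hinges on the observation that matching both value and slope there forces $u=w_{*}$ to be the \emph{only} positive critical point of $g$. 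Once this is established, the entire mixture structure, including the weights $\{w_j\}$, the regional geometry, and the densities $f_k$, collapses into the single scalar bound $2\Phi(-w_{*})$, recovering the classical $0.234$ ceiling.
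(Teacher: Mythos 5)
Your proposal is correct, and it reaches the paper's bound by a genuinely different device. Both arguments begin identically: collapse the double sum into a single random variable ($\Sigma_\beta$ in the paper, your $Z$), write the first-order condition $E'(\hat{\ell})=0$ as the moment identity $\mathbb{E}\left[2\Phi(-U)\right]=\mathbb{E}\left[U\phi(U)\right]$, i.e.\ $\mathbb{E}\left[F(U)\right]=0$ with $F(u)=2\Phi(-u)-u\phi(u)$. From there the paper changes variables to $V=\Phi(-U)$, invokes concavity of $h(v)=-\Phi^{-1}(v)\phi(\Phi^{-1}(v))$ and Jensen's inequality to get $\hat{a}=2\mathbb{E}[V]\le h(\mathbb{E}[V])$, and then needs a second step: setting $m=-\Phi^{-1}(\mathbb{E}[V])$, the resulting inequality $F(m)\le 0$ together with the sign pattern of $F$ forces $m\ge\hat{m}$ and hence $\hat{a}=2\Phi(-m)\le 2\Phi(-\hat{m})$. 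You instead prove the pointwise majorization $2\Phi(-u)\le 2\Phi(-w_*)+\lambda F(u)$ on $u\ge 0$ with $\lambda=2/(3-w_*^2)>0$, and take expectations against the law of $U$, killing the $F$ term by the first-order condition. Your global verification is clean because $g'(u)=-\phi(u)\left[2+\lambda(u^2-3)\right]=-\lambda\phi(u)(u^2-w_*^2)$, so $g$ increases to its unique critical point $w_*$ where $g(w_*)=0$ and decreases thereafter; this is airtight and avoids both the inverse-CDF concavity computation and the paper's appeal to a figure to justify the sign pattern of $2\Phi(-m)-m\phi(-m)$. The two routes are dual in spirit (Jensen for a concave function is itself a tangent-line bound), but yours applies the tangent line directly in $u$-space rather than in $V$-space, and arguably yields a more self-contained verification. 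One caveat you share with the paper: both proofs presuppose that $\hat{\ell}$ is an interior stationary point of $E$, which requires that not all the $\sigma_{j,m}(\beta)$ vanish (otherwise $E(\ell)=\ell^2$ has no finite maximiser); it would be worth stating this non-degeneracy explicitly, and likewise noting (as you implicitly use) that $F$ has a unique positive root $w_*$ with $w_*^2<3$, which follows from $F(0)=1>0$ and $F'(u)=(u^2-3)\phi(u)$.
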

\begin{proof}
See the Appendix in Section~\ref{app:pf3}.
\end{proof}

\end{proof}

\subsection{Geometric Temperature Spacings}

A geometric temperature schedule is induced when one chooses $\ell \propto \beta$ and so for some constant $C\in (1,\infty)$, $ \beta^{'} = C \beta$. The power of this is that the choice of the one-dimensional parameter $C$ determines the whole temperature schedule with $n-1$ spacings. In general this approach will not be optimal since  $\hat{\ell}$ from Theorem~\ref{Thr:optscal} can be  non-linearly dependent on $\beta$. However, for certain target distributions the dependence of $\hat{\ell}$ on $\beta$ is linear. The following corollary establishes a sufficient condition on the target for this to be the case.

\begin{corollary}
\label{Cor:corrgeom} 
Take a target distribution of the form specified in Theorem~\ref{Thr:optscal}, i.e.\ given by \eqref{eq:margiid} and \eqref{eq:WPTTpf}, but now where the marginal components satisfy
\begin{eqnarray}
	f_{(k,i)}(x_i) =f_k(x_i-\mu_k^i) \mathbbm{1}_{\left[x_i \in A_k^i\right]} \propto \exp \left(\left| \frac{x_i-\mu_k^i}{\sigma_k}\right|^z\right) \mathbbm{1}_{\left[x_i \in A_k^i\right]} \label{eq:expfam}
\end{eqnarray}
for some $z \in \mathbb{N} $. Then in the limit as $d \rightarrow \infty$ the $ESJD_\beta$ is maximised when $\ell$ is chosen to maximise \eqref{eq:theoremstatement} in Theorem~\ref{Thr:optscal}; but now with this choice of $\ell$ inducing an optimal acceptance rate $\hat{a}=0.234$. Furthermore, 
$\hat{\ell} \propto \beta$ and so the optimal setup is geometrically spaced. 

\end{corollary}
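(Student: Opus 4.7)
The plan is to exploit the specific exponential-family shape of $f_k$ to collapse the double mixture in (\ref{eq:theoremstatement}) to a single-Gaussian expression, after which the classical Roberts--Gelman--Gilks $0.234$ result and the linear dependence of $\hat{\ell}$ on $\beta$ both follow. The central observation is that for marginals of the stated shape, the quantity $I_k(\beta)=\mathrm{Var}_{f_k^{\beta}}[\log f_k(X)]$ is independent of both $k$ and $\sigma_k$ and depends on $\beta$ only through an overall $\beta^{-2}$ factor, so all the pairwise terms $\sigma_{j,m}(\beta)$ coincide.

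First I would compute $I_k(\beta)$ directly using the change of variable $u=\beta^{1/z}(x-\mu_k^i)/\sigma_k$. Under this substitution $f_k^{\beta}$ in $u$-coordinates is proportional to $\exp(\pm|u|^z)$, with all $\beta$- and $\sigma_k$-dependence absorbed into the Jacobian, while $\log f_k(X)$ becomes $\pm|U|^z/\beta + \mathrm{const}$. Treating the rescaled support of $U$ as effectively all of $\mathbb{R}$, the variable $|U|^z$ has a $\mathrm{Gamma}(1/z,1)$ distribution, giving
\[
I_k(\beta)=\frac{\mathrm{Var}[\,|U|^z\,]}{\beta^2}=\frac{1}{z\beta^2}.
\]
This value is identical for every $k$, so $\sigma_{j,m}(\beta)\equiv \tau/\beta$ for every $(j,m)$, where $\tau=\sqrt{2/z}$.

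Second, feeding this into (\ref{eq:theoremstatement}) and using $\sum_{j,m}w_j w_m = 1$, the double sum collapses to
\[
E(\ell)=2\ell^2\,\Phi\!\left(-\frac{\ell\tau}{2\beta}\right),
\]
which is exactly the classical single-component RWM efficiency function. Its maximiser $\hat{\ell}$ must satisfy $\hat{\ell}\tau/\beta=\hat{s}$, where $\hat{s}\approx 2.38$ is the universal constant maximising $s\mapsto s^2\Phi(-s/2)$; the induced acceptance rate is $\hat{a}=2\Phi(-\hat{s}/2)\approx 0.234$, saturating the upper bound of Lemma~\ref{accstepw}. The relation $\hat{\ell}=\hat{s}\beta/\tau$ is linear in $\beta$, and hence $\beta'=\beta+\hat{\ell}/d^{1/2}=\beta(1+\hat{s}/(\tau d^{1/2}))$, giving a $\beta$-independent ratio $C=\beta'/\beta$, i.e.\ a geometric schedule.

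The main obstacle is the bounded hypercube support built into the theorem setup: the rescaled domain $|u|\le\beta^{1/z}(b_k^i-a_k^i)/(2\sigma_k)$ is itself $\beta$-dependent, so the clean identity $I_k(\beta)=1/(z\beta^2)$ is exact only when the supports are taken as $\mathbb{R}$. On genuinely bounded $A_k^i$ one would need a dominated-convergence argument, leveraging the integrable envelopes of (\ref{eq:techcond1}), to show that the finite-support correction to $I_k(\beta)$ is uniformly negligible over the $\beta$-range of interest, so that the single-Gaussian collapse (and hence the $0.234$ conclusion together with the geometric spacing) is preserved in the asymptotic limit.
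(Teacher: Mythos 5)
Your proposal follows essentially the same route as the paper's own proof: a change of variables showing $I_k(\beta)\propto \beta^{-2}$ with a constant independent of $k$, so that all the $\sigma_{j,m}(\beta)$ coincide and the double mixture in \eqref{eq:theoremstatement} collapses to the single-component efficiency curve $2\ell^2\Phi\left(-\ell\tau/(2\beta)\right)$, after which the substitution $u=\ell/\beta$ gives $\hat{\ell}\propto\beta$ and the acceptance rate $2\Phi(-\hat{u}\tau/2)=0.234$. The bounded-support ``obstacle'' you flag at the end is a fair and genuine observation, but it is not a defect peculiar to your argument: the paper's proof simply asserts that the rescaled variance $D$ is free of $\beta$ and $k$, which on a genuinely bounded $A_k^i$ is only approximate because the truncation limits after rescaling depend on $\beta$ (and on $k$ unless $(b_k^i-a_k^i)/\sigma_k$ is constant, as in the paper's first Remark) --- so your version, which names the issue and sketches the dominated-convergence repair, is if anything the more candid of the two.
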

\begin{proof}
 
Recalling the result in \eqref{eq:theoremstatement} of Theorem~\ref{Thr:optscal} with $\sigma_{j,m}(\beta)=  \left( I_j(\beta)+I_m(\beta)\right)^{\frac{1}{2}}$ where $I_k(\beta) =  \mbox{Var}_{f_{k}^{\beta}}\left[ \log f_{k}(X) \right] $. Computing $I_k(\beta)$ when $f_k$ take the form given in \eqref{eq:expfam}:
\begin{eqnarray}
\mbox{Var}_{f_{k}^{\beta}}\left[ \log f_{k}(X) \right] 
=\mbox{Var}_{f_{k}^{\beta}}\left(\left| \frac{x_i-\mu_k^i}{\sigma_k}\right|^z\right) = \frac{1}{\beta^2} \mbox{Var}_{f_{k}^{\beta}}\left(\beta \left| \frac{x_i-\mu_k^i}{\sigma_k}\right|^z\right) \nonumber
\end{eqnarray}
and a simple change of variables for the final variance term on the RHS shows that 
\[\mbox{Var}_{f_{k}^{\beta}}\left(\beta\left| \frac{x_i-\mu_k^i}{\sigma_k}\right|^z\right) = D \]
for some constant $D \in \mathbb{R}_{+}$ that does not depend on $\beta$ or $k$. Consequently, $I_k(\beta) = D/ \beta^2 $ for all $k = 1,\ldots, K$ and so for all pairs $j$ and $m$ then \[\sigma_{j,m}(\beta)= \frac{\sqrt{2D}}{ \beta}.\]
Consequently, the limiting form of the $ESJD_{\beta}$ in \eqref{eq:theoremstatement} simplifies to
\begin{equation}
2 \ell^2\Phi\left(-\frac{\ell}{\beta\sqrt{2}}\right)   \nonumber
\end{equation}
and a simple substitution of $u=\ell/\beta$ shows that the maximising  $\ell$ is such that $\hat{\ell}\propto \beta$ and this induces an optimal acceptance rate of
\begin{equation}
2\Phi \left(-\frac{\hat{\ell}}{\beta\sqrt{2}}\right)  =0.234 ~~~~~(3 s.f.).
	\nonumber
\end{equation}
\end{proof}

\section{Conclusion and Further work} \label{conc}
The lack of regional weight-preservation upon power tempering can lead to exponentially poor performance in high dimensions. Certain prototype algorithms to overcome this problem can be considered to be attempting to stabilise the regional mass similarly to the RWPTTs introduced in Definition~\ref{eq:defRWPTT}. Setting up a tempering schedule to efficiently pass the mixing information through the schedule is important to attain good performance. This paper has made two theoretical contributions to analyse exactly this issue, Theorem~\ref{Thr:optscal} and Corollary~\ref{Cor:corrgeom}.

Theorem~\ref{Thr:optscal} establishes that the scalability of the mixing in the temperature space with dimension requires consecutive inverse temperature spacings that are $\mathcal{O}(d^{-1/2})$; so with the RWM type moves proposing swaps between consecutive temperature levels then one expects the mixing time in the temperature space to be $\mathcal{O}(d)$. In addition to this, the theorem gives guidance on optimal setup to a practitioner who may be implementing a strategy that uses a RWPPT algorithm since it suggests that the acceptance ratio should be $\le 0.234$.

Corollary~\ref{Cor:corrgeom} extends the theorem when more assumptions are made regarding the regional marginal densities. The optimal swap acceptance rate becomes exactly $0.234$ and this induces a geometric temperature schedule making tuning for an optimal setup significantly simpler.

In conclusion, the results here are complementary to recent work in \cite{GarethJeffNick}, \cite{Tawn2018} and \cite{NTawnThesis}. These methods are still prototype but the work in this paper enhances the understanding of the scalability these new approaches. Also it gives theoretical insight that can be used to direct the future directions of such approaches when the modes/regions exhibit inhomogeneous behaviour.

The authors are currently combining the ideas and theoretical contributions from this work and that of the companion papers \cite{GarethJeffNick}, and \cite{NTawnThesis} to develop and analyse the scalability of a novel practically-applicable algorithm for exploration of multimodal target distributions. The results presented here are proving crucial to informing optimisation and the computational complexity of the proposed algorithm.

\section{Appendix}
\label{sec:appendix}

\subsection{Proof of Lemma~\ref{lemma:asymWB}} \label{app:pf1}
\begin{proof}
Due to the shifted iid form of the marginal targets from \eqref{eq:iidshift}, make the change of variables $x_i^s=x_i-\mu_k^i$ when $x \in A_{k}$. For notational convenience herein, denote \[  
C_{k}(\beta):=\int_{A_k^i}f_{(k,i)}^{\beta}(z)dz
\] where it should be noted that the LHS does not depend on $i$ due to the shifted iid form of the marginal components.

Consider the term $H_d^{\beta}(x)$ from (\ref{eq:Bw}), and use a Taylor expansion to third order:
\begin{eqnarray}
&&\sum_{k=1}^{K} \left[\sum_{i=1}^d  \log \left( \frac{f_{k}^{\beta^{'}}(x_i^s)}{C_{k}(\beta^{'})}\right) - \log \left( \frac{f_{k}^{\beta}(x_i^s)}{C_{k}(\beta) } \right)\right] \mathds{1} _{[x \in A_{k}]}\nonumber\\
	&=&\sum_{k=1}^{K} \left[\sum_{i=1}^d \epsilon \log\left( f_{k}(x_i^s) \right)-\log [C_{k}(\beta^{'})]  +\log [ C_{k}(\beta)] \right] \mathds{1} _{[x \in A_{k}]} \nonumber\\
	&=& \sum_{k=1}^{K} \Bigg[\sum_{i=1}^d \epsilon \log\left( f_{k}(x_i^s) \right)-\epsilon \frac{\partial}{\partial \beta}\log \left(C_{k}(\beta)\right)\nonumber\\
	&& -\frac{\epsilon ^2}{2} \frac{\partial^2}{\partial^2 \beta}\log \left( C_{k}(\beta)\right)- \frac{\epsilon^3}{6} \frac{\partial^3}{\partial^3 \beta}\log \left( C_{k}(\beta+\xi_{k}) \right) \Bigg] \mathds{1} _{[x \in A_{k}]} 
	\label{eq:TayHx}
\end{eqnarray}
where  $0<|\xi_{k}|<\epsilon$ is the mean value Taylor remainder. Analysing the first order derivative term in (\ref{eq:TayHx}) and utilising the assumption in \eqref{eq:techcond1} that permits the use of the Leibniz integral rule
\begin{eqnarray}
 M_{k}(\beta):=\frac{\partial}{\partial \beta}\log \left( C_{k}(\beta) \right) &=& \frac{ \int_{A_k^i}\log\left(f_{(k,i)}(x)\right)f_{(k,i)}^{\beta}(x)dx}{\int_{A_k^i}f_{(k,i)}^{\beta}(z)dz}\nonumber \\
&=& \mathbb{E}_{f_{(k,i)}^{\beta}}\left[ \log f_{(k,i)}\left(X\right) \right]\nonumber \\
&=&  \mathbb{E}_{f_{k}^{\beta}}\left[ \log (f_{k}\left(X\right) \right]. \label{eq:Mbetaintro}
\end{eqnarray}

Now consider the second order derivative term in (\ref{eq:TayHx}); a similar calculation to \eqref{eq:Mbetaintro}, that utilises assumption \eqref{eq:techcond1}, shows that
\begin{eqnarray}
 I_{k}(\beta):=\frac{\partial^2}{\partial^2 \beta}\log \left(C_{k}(\beta) \right) 
= \mbox{Var}_{f_{k}^{\beta}}\left[ \log f_{k}(X) \right] . \nonumber
\end{eqnarray}

For notational convenience, and due to the ``shifted'' iid setup, herein the following notation is used: \[ J_k(\beta):=\frac{\partial^3}{\partial^3 \beta}\log \left( \int_{A_k^i}f_{(k,i)}^{\beta}(z)dz\right). \]
Using this notation one can rewrite \eqref{eq:TayHx} as
\begin{eqnarray}
	H_d^{\beta}(x) &=&\sum_{k=1}^{K} \Bigg[\sum_{i=1}^d \epsilon \log\left( f_{k}(x_i^s) \right)-\epsilon M_k(\beta) \nonumber \\ &&-\frac{\epsilon ^2}{2}I_k(\beta)- \frac{\epsilon^3}{6} J_k(\beta+\xi_k)  \Bigg] \mathds{1} _{[x\in A_{k}]}. \label{Hxbettay}
\end{eqnarray}
By identical methodology with $y_i^s=y_i-\mu_k^i$ when $y \in A_k^i$ then
\begin{eqnarray}
	H_d^{\beta}(y) &=&-\sum_{k=1}^{K} \Bigg[\sum_{i=1}^d \epsilon M_k(\beta) +\frac{\epsilon ^2}{2} I_k(\beta)\nonumber \\  &&+ \frac{\epsilon^3}{6} J_k(\beta+\xi_k)-\epsilon \log\left( f_{k}(y_i^s) \right)  \Bigg] \mathds{1} _{[y \in A_{k}]}. \label{eqHtayy}
\end{eqnarray}
Using Taylor expansion, there exists $0<|\xi_{T_k}|<\epsilon$ such that 
\begin{equation}
 M_k(\beta')= M_k(\beta)+ \epsilon I_k(\beta)+ \frac{\epsilon^2}{2} J_k(\beta+\xi_{T_k}). \label{eq:tayMbet}
\end{equation}
Substituting $M_k(\beta)$ from \eqref{eq:tayMbet} into \eqref{eqHtayy}
\begin{eqnarray}
	H_d^{\beta}(y) &=&-\sum_{k=1}^{K} \Bigg[\sum_{i=1}^d \epsilon M_k(\beta^{'}) -\frac{\epsilon ^2}{2} I_k(\beta)-\epsilon \log\left( f_{k}(y_i^s) \right)\nonumber \\
	&&~~~+ \frac{\epsilon^3}{6} \left(J_k(\beta+\xi_k)- 3J_k(\beta+\xi_{T_k}) \right) \Bigg] \mathds{1} _{[y \in A_{k}]}. \label{eq:hytayfin}
\end{eqnarray}
Combining \eqref{Hxbettay} and \eqref{eq:hytayfin} then $B$ can be written as
\begin{eqnarray}
	B &=&  \sum_{j=1}^K\sum_{m=1}^K \sum_{i=1}^d \Bigg[r_{(x,y,j,m),i}^{(\beta,\beta^{'})} - \frac{\epsilon^3}{2}  J_k(\beta+\xi_{T_k}) \Bigg]\mathds{1} _{[x \in A_{j}]}\mathds{1} _{[y \in A_{m}]}.  \nonumber
\end{eqnarray}
where
\begin{eqnarray}
	r_{(x,y,j,m),i}^{(\beta,\beta^{'})} &=& \Bigg[ \epsilon  [\log\left( f_{j}(x_i^s)\right)-M_j(\beta) ]-\epsilon [\log\left( f_{m}(y_i^s) \right)-M_m(\beta^{'})]\nonumber \\
	&&~~-\frac{\epsilon ^2}{2}\left( I_j(\beta)+I_m(\beta)\right)\Bigg] 
\end{eqnarray}

Let $E_k^x= \{x \in A_{k}\}$,  then using the fact that the $r_{(x,y,l,m),i}^{(\beta,\beta^{'})}$ are independent and identically distributed for all $i \in \{ 1, \ldots,d\}$
\begin{equation}
	\mathbb{E}_{\pi_d}\left(  r_{(x,y,j,m),i}^{(\beta,\beta^{'})}  \Big| E_j^x,E_m^y\right) = -\frac{\epsilon ^2}{2}\left( I_j(\beta)+I_m(\beta)\right) 
	\label{eq:weightpresexp}
\end{equation}
and noting that there exists $|\xi_{I_m}|<\epsilon$ such that $I_m(\beta^{'})=I_m(\beta)+\epsilon J_m(\beta+\xi_{I_m})$
\begin{eqnarray}
	\mbox{Var}_{\pi_d}\left(  r_{(x,y,j,m),i}^{(\beta,\beta^{'})}\Big| E_j^x,E_m^y  \right) &=& \epsilon ^2\left(  \mbox{Var}_{f_{j}^{\beta}}\left[ \log f_{j}(X) \right] +\mbox{Var}_{f_{m}^{\beta^{'}}}\left[ \log f_{m }(X) \right]  \right) \nonumber\\
	&=&  \epsilon ^2 \left( I_j(\beta)+I_m(\beta)\right)+ \epsilon^3 J_m(\beta+\xi_{I_m}) . \label{eq:weightpresvar}
\end{eqnarray}

By  continuity of $J_k(\cdot)$, for all $k=1,\ldots,K$ then 
\[ \lim_{d\rightarrow \infty} J_k (\beta+\epsilon) = J_k (\beta) \]
and so there exists a bounding constant $C\in \mathbb{R}$ such that for all $\xi \in \{\xi_{T_1},\ldots,\xi_{T_K},\xi_{I_1},\ldots,\xi_{I_K}  \}$ and $k=1,\ldots,K$, \[ |J_k (\beta+\xi)| <C.\]
Consequently,
\[ \left|\epsilon^3 \sum_{i=1}^d J_k(\beta+\xi_{T_k})\right| \le \frac{\ell^3}{d^{1/2}}C \rightarrow 0 ~~\mbox{as}~~  d\rightarrow \infty. \]

With $\epsilon= \ell/d^{1/2}$, then using  (\ref{eq:weightpresexp}), (\ref{eq:weightpresvar}) and the central limit theorem for iid random variables, e.g. \cite{Durrett2010}, then as $d\rightarrow \infty$
\begin{equation}
	\sum_{i=1}^d r_{(x,y,j,m),i}^{(\beta,\beta^{'})} \Bigg| E_j^x,E_m^y \Rightarrow N\left( -\frac{\ell ^2}{2}\left( I_j(\beta)+I_m(\beta)\right),  \ell^2 \left( I_j(\beta)+I_m(\beta)\right) \right).  \nonumber
\end{equation}
Thus, by trivial use of Slutsky's Theorem 
\begin{equation}
B|E_j^x,E_m^y \Rightarrow N\left( -\frac{\ell ^2}{2}\left( I_j(\beta)+I_m(\beta)\right),  \ell ^2 \left( I_j(\beta)+I_m(\beta)\right) \right).  \nonumber
\end{equation}
and so with $w_k = \mathbb{P}_{\pi}(x \in A_k)$
\begin{equation}
	B \Rightarrow \sum_{j=1}^K\sum_{m=1}^K w_j w_m N\left( -\frac{\ell ^2}{2}\left( I_j(\beta)+I_m(\beta)\right),  \ell ^2 \left( I_j(\beta)+I_m(\beta)\right) \right) \label{eq:cltfull}
\end{equation}
which concludes the proof of Lemma~\ref{lemma:asymWB}.
\end{proof}

\subsection{Proof of Lemma~\ref{asymlemmaesjd}} \label{app:pf2}
\begin{proof}
Suppose that  $G\sim N(-\frac{\sigma^2}{2}, \sigma^2)$ then as derived in \cite{roberts1997weak},
\begin{equation}
\mathbb{E}\left(1\wedge e^G \right)= 2 \Phi\left(-\frac{\sigma}{2}\right) .\nonumber
\end{equation}
Using the definition of $ESJD_\beta$ with $\epsilon= \ell/d^{1/2}$ then
\[
d \times ESJD_\beta =\ell^2 \times  \mathbb{E}_{\pi_d}\left[ 1 \wedge e^B \right].
\]
Using the result of Lemma~\ref{lemma:asymWB} showing thet $B$ is a Gaussian mixture then
\begin{equation}
\lim_{d\rightarrow \infty}  \mathbb{E}_{\pi_d}\left[ 1 \wedge e^B \right] =\ell^2 \sum_{j=1}^K\sum_{m=1}^K w_j w_m \left(  2 \Phi\left(-\frac{\ell\sigma_{j,m}(\beta)}{2}\right)  \right), \nonumber
\end{equation}
where
$\sigma_{j,m}(\beta)= \left( I_j(\beta)+I_m(\beta)\right)^{\frac{1}{2}}$. Thus proving the result of Lemma~\ref{asymlemmaesjd}.

\end{proof}

\subsection{Proof of Lemma~\ref{accstepw}} \label{app:pf3}

\begin{proof}
This proof extends an argument of \cite{Sherlock2006}, where an optimal scaling result is  proven for spherically symmetric target distributions. Note also that herein $\phi(\cdot)$ denotes the density function of the standard Gaussian distribution.

First note that  the RHS of (\ref{eq:ESJDfinal}) can be expressed as
\begin{equation}
	\ell^2\mathbb{E}_{\Sigma_\beta} \left(    2 \Phi \left(-\frac{\ell\Sigma_{\beta}}{2}\right)  \right) \label{eq:ESJDRVform}
\end{equation}
where $\Sigma_{\beta}$ is a discrete RV such that $\mathbb{P}(\Sigma_{\beta}=\sigma_{i,j}(\beta))=w_iw_j$.

Differentiating \eqref{eq:ESJDRVform} with respect to $\ell$; setting equal to zero in order to find the optimal spacing, $\hat{\ell}$ and then rearranging gives
\begin{equation}
 2 \mathbb{E}_{\Sigma_{\beta}} \left(   \Phi\left(-\frac{\hat{\ell}\Sigma_{\beta}}{2}\right) \right)  = \mathbb{E}_{\Sigma_{\beta}}   \left(\frac{\hat{\ell}^2 \Sigma_{\beta}}{2} \phi\left(-\frac{\hat{\ell}\Sigma_{\beta}}{2}\right)  \right) . \label{eq:optidentityneed}
\end{equation}
As in \cite{Sherlock2006}, define the function $h(\cdot)$ defined so that
\begin{equation}
	h(x)= -\Phi^{-1}\left(x\right) \phi\left( \Phi^{-1}\left(x\right)\right)  \label{eq:hconcavefun}
\end{equation}
and note that $h(\cdot)$ is a concave function since for any $x \in (0,1)$ 
\begin{equation}
	\frac{\partial^2 h}{\partial x^2} =  -2 \frac{\Phi^{-1}\left(x\right)}{ \phi\left( \Phi^{-1}\left(x\right)\right)} < 0. \nonumber
\end{equation}

By considering the form of the $ESJD_\beta$ given in (\ref{eq:ESJDregional}) then it is clear that for any spacing  optimal acceptance rate, $\hat{a}$ is given by
\begin{equation}
	\hat{a} = \mathbb{E}_{\Sigma_{\beta}} \left(    2 \Phi\left(-\frac{\hat{\ell}\Sigma_{\beta}}{2}\right)  \right). \nonumber
\end{equation}
Letting $V:= \Phi\left(-\frac{\hat{\ell}\Sigma_{\beta}}{2}\right) $ then by (\ref{eq:optidentityneed}) at the optimal spacing 
\begin{equation}
	\hat{a} = 2 \mathbb{E}_{\Sigma_{\beta}} \left(    V  \right) = \mathbb{E}_{\Sigma_{\beta}} \left(  h(V)  \right) \nonumber
\end{equation}
where $h(\cdot)$ is given above in (\ref{eq:hconcavefun}). Since $h(\cdot)$ is concave then Jensen's inequality can be applied to give
$\mathbb{E}_{\Sigma_{\beta}} \left(  h(V)  \right) \leq  h \left(\mathbb{E}_{\Sigma_{\beta}} \left( V  \right) \right)$ 
and thus
\begin{equation}
	\hat{a} =  2\mathbb{E}_{\Sigma_{\beta}} \left(    V  \right) \leq -\Phi^{-1}\left(\mathbb{E}_{\Sigma_{\beta}} \left( V  \right)\right) \phi\left( \Phi^{-1}\left(\mathbb{E}_{\Sigma_{\beta}} \left( V  \right)\right)\right). \label{eq:keyineq2}
\end{equation}
Letting $m:= -\Phi^{-1}\left(\mathbb{E}_{\Sigma_{\beta}} \left( V  \right)\right)$ then by (\ref{eq:keyineq2})
\begin{equation}
	2\Phi\left(-m\right) \leq m \phi \left( -m \right), \label{eq:nearlythere}
\end{equation}
with equality only in the case when $m$ is the optimiser of the function $m^2\Phi\left(-m\right) $. Let this optimal $m$ be denoted $\hat{m}$ then 
\begin{equation}
	2\Phi\left(-\hat{m}\right) =  \hat{m}\phi\left( -\hat{m} \right) = 0.234 ~~\mbox{(3 s.f.)}.  \label{eq:rough0.234}
\end{equation}
\begin{figure}[h]
\begin{center}
\includegraphics[keepaspectratio,width=8cm]{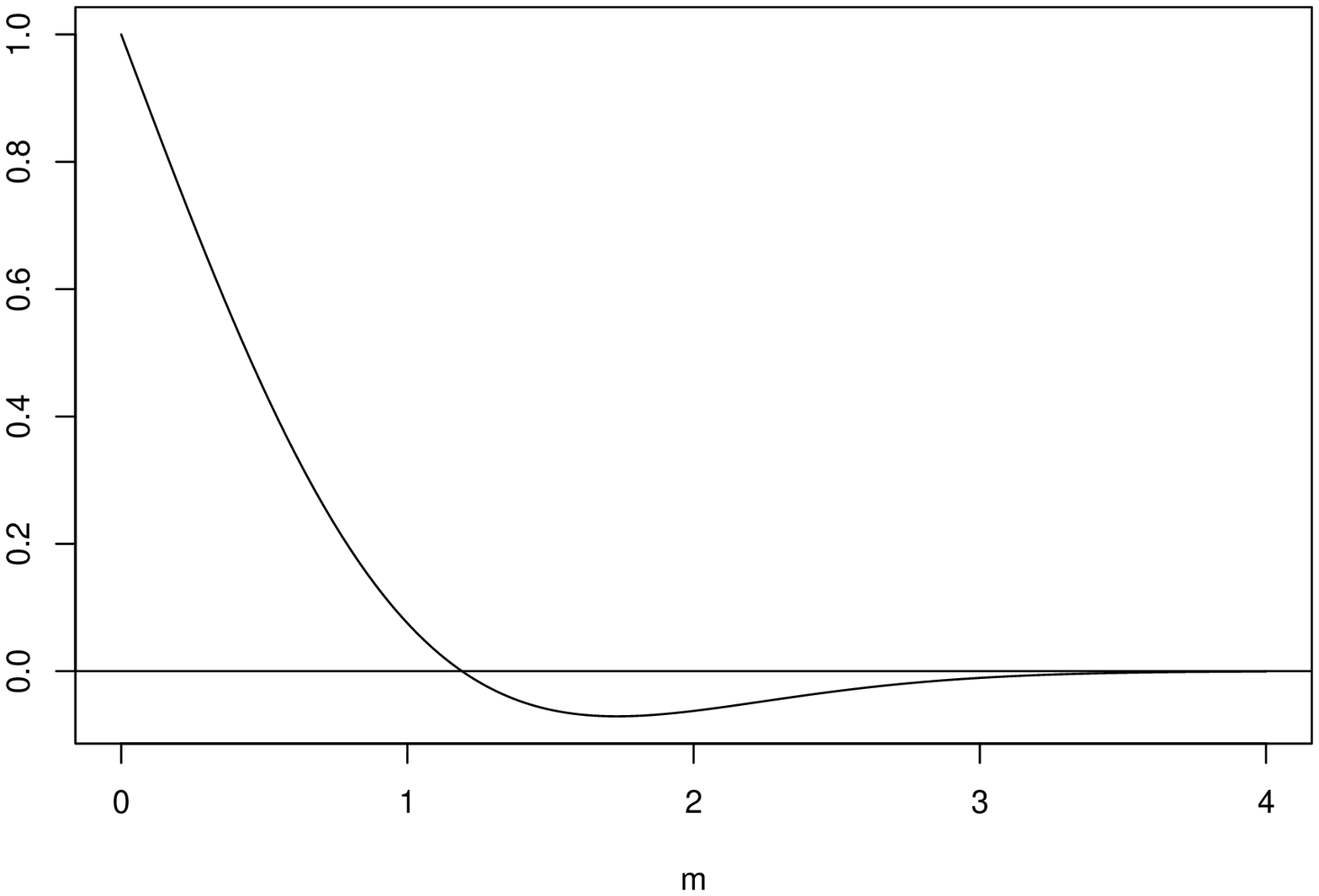}
\caption{Plot of the function 	$2\Phi\left(-m\right) - m \phi \left( -m \right)$}
\label{fig:asymplotscal}
\end{center}
\end{figure}

Figure~\ref{fig:asymplotscal} shows a plot of the function 	$2\Phi\left(-m\right) - m \phi\left( -m \right)$. Although not entirely clear in the figure, there is only  one point $\hat{m}$ giving a root of the function (occuring in the interval of $[0,2]$) and then for $m>\hat{m}$ the inequality given above in (\ref{eq:nearlythere}) holds strictly. Consequently, given that (\ref{eq:nearlythere}) holds then it implies that in this case $m>\hat{m}$ and so crucially due to the decreasing monotonicity of $\Phi(-m)$ the acceptance rate of the algorithm satisfies
\begin{equation}
	\hat{a}=2\Phi\left(-m\right) \leq 2\Phi\left(-\hat{m}\right) = 0.234 ~~\mbox{(3 s.f.)}.
\end{equation}

\end{proof}

\bibliographystyle{apalike}   

\bibliography{biblisim}
\end{document}